\documentclass[journal]{IEEEtran}
\usepackage{cite}
\usepackage{subfig}
\usepackage{amsmath,amssymb,amsfonts,amsthm}
\usepackage{algorithmic}
\usepackage{graphicx}
\usepackage{graphics}
\usepackage{hyperref}
\hypersetup{colorlinks=true,urlcolor=red}
%

\usepackage{bm} 
\usepackage{xspace}

\newcommand{\xmath}[1] {\ensuremath{#1}\xspace}
\newcommand{\blmath}[1] {\xmath{\bm{#1}}}




\newcommand{\Kb}{{\blmath K}}

\newcommand{\Xb}{{\blmath X}}
\newcommand{\Yb}{{\blmath Y}}

\newcommand{\mb}{{\blmath m}}
\newcommand{\nb}{{\blmath n}}

\newcommand{\wb}{{\blmath w}}
\newcommand{\xb}{{\blmath x}}
\newcommand{\yb}{{\blmath y}}
\newcommand{\zb}{{\blmath z}}


\newcommand{\Hc}{\mathcal{H}}

\newcommand{\Xc}{\mathcal{X}}
\newcommand{\Yc}{\mathcal{Y}}

\newcommand{\Pc}{{{\mathcal P}}}

\newcommand{\Kd}{\mathbb{K}}


\newcommand{\beq}{\begin{equation}}
\newcommand{\eeq}{\end{equation}}
\newcommand{\beqa}{\begin{eqnarray}}
\newcommand{\eeqa}{\end{eqnarray}}
\newcommand{\Fc}{{\mathcal F}}

\newtheorem{theorem}{Theorem}

\newcommand{\Sc}{{{\mathcal S}}}

\newcommand{\Dd}{\mathbb{D}}

\begin{document}

\title{Unsupervised Deep Learning for MR Angiography with Flexible Temporal Resolution}
\date{\vspace{-4ex}}

\author{Eunju~Cha, 
        Hyungjin Chung, 
        Eung Yeop Kim,
        and~Jong~Chul~Ye,~\IEEEmembership{Fellow,~IEEE}
\thanks{E, Cha, H. Chung, and J. C. Ye are with the Department of Bio and Brain Engineering, 
		Korea Advanced Institute of Science and Technology (KAIST), 
		Daejeon 34141, Republic of Korea. 
		J.C. Ye is also with the Department of Mathematical Sciences, KAIST. 
		E.Y.Kim is with Dept. of Radiology, Gachon University College of Medicine, Incheon, Namdong-gu, 21565, Republic of Korea} 
\thanks{This work is supported by National Research Foundation (NRF) of Korea, Grant
		number NRF2016R1A2B3008104.}
}


\maketitle

\begin{abstract}
 Time-resolved  MR angiography (tMRA) has been widely used for dynamic contrast enhanced MRI (DCE-MRI) due to its highly accelerated acquisition. In tMRA,  the periphery of the $k$-space data are sparsely sampled so that  neighbouring frames can be merged to construct one temporal frame. However, this view-sharing scheme fundamentally limits the temporal resolution,
 and it is not possible to change the view-sharing number to achieve different spatio-temporal resolution trade-off. 
Although many deep learning approaches have been recently proposed for MR reconstruction from sparse samples, the existing
approaches usually require
 matched fully sampled $k$-space reference data  for supervised training, which is not suitable for tMRA. This is because
high spatio-temporal resolution ground-truth images are not available for tMRA.
To address this problem,
here we propose a novel unsupervised deep learning  using optimal transport driven cycle-consistent generative adversarial network (cycleGAN). In contrast to the conventional cycleGAN with two pairs of generator and discriminator, the new architecture
requires just a single pair of generator and discriminator, which makes the training much simpler and improves the performance.
Reconstruction results using in vivo tMRA data set confirm that the proposed method can immediately generate high quality reconstruction results at   various choices of view-sharing numbers, allowing us to exploit better trade-off between spatial and temporal resolution  in time-resolved MR angiography.
\end{abstract}

\begin{IEEEkeywords}
Time-resolved MRA, dynamic contrast enhanced MRI,  unsupervised learning, cycleGAN, penalized least squares (PLS), optimal transport
\end{IEEEkeywords}

\IEEEpeerreviewmaketitle

\section{Introduction}\label{sec:introduction}
\IEEEPARstart{D}{CE-MRI} 
 is one of the essential imaging methods for clinical diagnosis. DCE-MRI gives information on the physiological characteristics of tissues, such as blood vessel function, etc., so it is useful for the imaging of strokes or cancers \cite{turnbull2009dynamic,yankeelov2009dynamic}.

In DCE-MRI, after injecting the contrast agent to patient, a sequence of MR images is obtained.
Since the motion during the acquisition of $k$-space data can degrade the quality of MR images, 
multiple studies have been conducted to accelerate DCE-MRI with improved temporal resolution.
Specifically, time-resolved MRA such as time-resolved angiography with interleaved stochastic trajectories (TWIST) \cite{laub2006syngo} has been widely used as  one of the solutions for enhancing the temporal resolution of DCE-MRI.  Here, 
the periphery of $k$-space data is very sparsely sampled at each time frame while the center of $k$-space frequency data is fully acquired for the retention of image contrast. 
Then, the high frequency regions of $k$-space data from several time frames are combined together to form a single $k$-space data while the low frequency of $k$-space data remains at intact.  This view-sharing leads to a uniformly subsampled $k$-space sampling pattern as shown in Fig.~\ref{fig:TWIST_concept}(a), after which
the generalized autocalibrating partial parallel acquisition (GRAPPA) \cite{griswold2002generalized} can be applied to reconstruct images from the uniformly downsampled $k$-space data. 

\begin{figure}[!t] 	
\center{ 
\includegraphics[width=8cm]{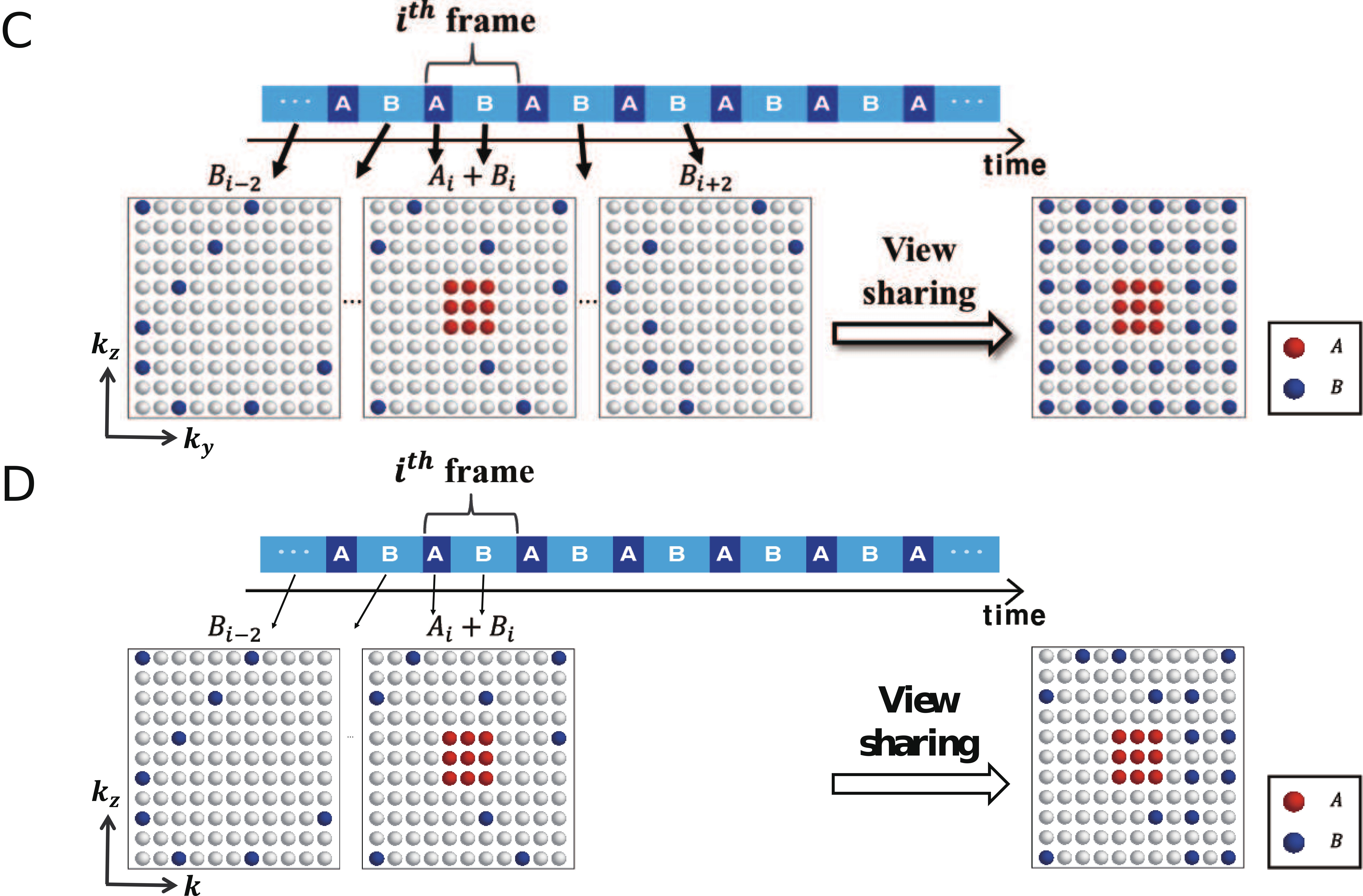}
\caption{TWIST sampling scheme used in this paper. The center and periphery of $k$-space are sampled at A and B frames, respectively. (a) Conventional sampling scheme for 2D GRAPPA reconstruction, and  (b) an example of  reduced view sharing for our method.}
\label{fig:TWIST_concept}
}
\end{figure}

Although this results in a highly accelerated acquisition with a noticeable enhancement in both temporal and spatial resolution, 
 the view sharing from several temporal frames results in the blurring of the temporal resolution. 
 In fact, the resulting inaccuracy in the
temporal resolution is considered as the main huddle when using tMRA as a quantitative tool for perfusion study \cite{lim20083d, kim2019advanced}.
  Unfortunately, the current sampling scheme with fixed temporal window is not flexible for the
reconstruction at reduced sliding window sizes to improve temporal resolution, 
 since $k$-space data of the reduced view sharing is a subset of uniformly sampled $k$-space data, which results in the coherent
aliasing artifacts that are difficult to overcome by parallel imaging or compressed sensing.
 Therefore, the current methods have limitation in investigating various spatio-temporal resolution trade-off to accurately
 quantify the perfusion dynamics.
%
 
%
 To overcome this limitation, a
$k$-space interpolation algorithm \cite{cha2017true} using annihilating filter-based low rank Hankel matrix approach (ALOHA) \cite{jin2016general, lee2016acceleration, lee2016reference} was proposed to synergistically combine CS-MRI and parallel MRI (pMRI) as $k$-space interpolation problems. Nonetheless, high computational complexity for matrix decomposition, which is essential in this algorithm, is an obstacle to practical applications.

Recently, deep learning approaches  have been extensively studied for accelerated
MRI \cite{hammernik2018learning, han2017deep, han2019k, kwon2017parallel, lee2018deep, schlemper2018deep, wang2016accelerating}.
In particular, in \cite{hammernik2018learning}, deep network architecture using unfolded iterative compressed sensing (CS) algorithm was proposed. Domain adaptation \cite{han2017deep}, deep  residual learning \cite{lee2018deep}, and data consistency layers \cite{schlemper2018deep} were properly utilized to reconstruct CS-MRI. All of these methods have demonstrated dramatic performance improvement over the CS approaches \cite{lustig2007sparse,jung2009k,lingala2011accelerated,jin2016general,lee2016acceleration}, while significantly reducing computational complexity.
Unfortunately,  most of the existing deep learning approaches require matched ground-truth data for supervised training,
which is not applicable to the tMRA since high spatial resolution with fast temporal resolution ground-truth images
cannot be acquired in practice. Therefore, unsupervised neural network training without matched reference data
is required.

To address this issue, 
here we propose a novel   cycleGAN architecture inspired by
the optimal transport (OT) theory \cite{villani2008optimal, peyre2019computational,lim2019cyclegan, sim2019optimal}.
One of the important advantages of our cycleGAN is that
the necessity of obtaining fast-temporal resolution data is no more required for training,  as long
as we can obtain {\em unmatched} reference data of high spatial resolution.
Moreover,   once the network is trained, subsampled $k$-space data at various view sharing can be used as input to generate reconstruction results at  various spatial and temporal resolutions. 
In terms of network architecture,  our
cycleGAN architecture is much simpler than the conventional one \cite{zhu2017unpaired},
since
one of the generators in the standard cycleGAN can be replaced with a deterministic operation so that
we only need a single pair of generator and discriminator. This makes training more stable and leads to better reconstruction quality than the original cycleGAN. 
Using experimental results, we verify that our cycleGAN   can reconstruct high quality tMRA  at various temporal resolutions.

\section{Related Works}

\subsection{Optimal Transport Driven CycleGAN}

Here
 we briefly review the OT-driven cycleGAN design in our companion paper \cite{sim2019optimal}, which is used
for our unsupervised learning method.

Consider an inverse problem, where  a noisy  measurement $\yb \in \Yc$ from an unobserved image 
$\xb \in \Xc$ is modeled by
\begin{eqnarray}
\yb&=&\Hc \xb +\wb \ ,
\end{eqnarray}
where $\wb$ is the measurement noise and $\Hc : \Xc \mapsto \Yc$ is the {\em known} measurement 
operator. 
Since inverse problems are ill-posed,
the penalized least squares (PLS) approach is a classical strategy to mitigate the 
ill-posedness:
\begin{eqnarray}\label{eq:st}
\hat \xb = \arg \min_\xb c(\xb;\yb) := \|\yb - \Hc \xb\|^2 + R(\xb)
\end{eqnarray}
 where $R(\xb)$ is a penalty function to impose the constraint to the reconstruction image.

{
 In our companion paper\cite{sim2019optimal}, we proposed a new PLS cost function with a novel deep learning prior:}
\begin{eqnarray}\label{eq:cost0}
c(\xb,\yb;\Theta)=\|\yb-\Hc\xb\|+ \| G_\Theta(\yb) -\xb\|
\end{eqnarray}
 where $G_\Theta$ is a  neural network with the network 
parameter $\Theta$ and input $\yb$.
 The new PLS cost in \eqref{eq:cost0}  was proposed so that
 the unknown $\xb$ is estimated under the constraint that there exists an inverse mapping from the measurement $\yb$ 
by a neural network $G_\Theta$.
The new PLS cost function has many important  advantages over existing PLS in \eqref{eq:st}. In particular,
if  the global minimizer is achieved, i.e. $c(\xb,\yb;\Theta)=0$,  then  we have
\begin{eqnarray*}
 \yb=\Hc \xb,\quad \xb=G_{\Theta}(\yb)
\end{eqnarray*}
Therefore, $G_{\Theta}$ can be an inverse of the forward operator $\Hc$, which is the ultimate goal
in the inverse problem.

Since we do not have any matched reference data,
we further assume that  $\xb,\yb$ in \eqref{eq:cost0} are random vectors
with the measure $\mu$ and $\nu$, respectively,
so that the estimation of the parameter $\Theta$ should be done by considering
all realizations of $\xb,\yb$. 
Therefore, our goal of the unsupervised learning is to find the parameterized map $G_\Theta:\Yc \mapsto \Xc$ such
that average cost with respect to some joint distribution $\pi(x,y)\in P(\Xc\times \Yc)$ can be minimized.
This 
is equal to finding the transportation mapping between two probability measures $\mu\in P(\Xc)$ and  $\nu\in P(\Yc)$ \cite{villani2008optimal, peyre2019computational}.
In particular, rather than choosing arbitrary joint distributions, the optimal transport theory \cite{villani2008optimal, peyre2019computational} informs us that
the optimization problem should be formulated with respect to the optimal transportation cost  
%
\begin{eqnarray}
\label{eq:avg_transport_cost}
\mathbb{K}(\Theta) &:=  \min_{\pi} \ \int_{\Xc \times \Yc} c(\xb, \yb;\Theta) d\pi(\xb, \yb),
\end{eqnarray}
where the minimum is taken over the joint distribution $\pi(x,y)$ whose marginal distribution with respect to $\Xc$ and $\Yc$ is $\mu$ and $\nu$, respectively. 
Another important discovery in our companion paper \cite{sim2019optimal}   is that the resulting primal problem
can be equivalently represented by the Kantorovich dual formulation \cite{villani2008optimal, peyre2019computational}:
\begin{equation*}
\min_{\Theta} \mathbb{K}(\Theta) = \min_{\Theta} \max_{\Upsilon} \ \ell (\Theta; \Upsilon),
\end{equation*}
where 
\begin{equation}
\label{eq:cycle_gan_loss}
\ell(\Theta; \Upsilon) = \gamma\ell_{cycle}(\Theta) + \ell_{WGAN}(\Theta; \Upsilon) \quad .
\end{equation}
Here, $\gamma$ denotes some hyper-parameter, and
$\ell_{cycle}$ and $\ell_{WGAN}$ are simplified as
\begin{eqnarray}
\begin{split}
\ell_{cycle}(\Theta) =  \int_\Xc \|\xb - G_\Theta (\Hc \xb)\|d\mu(\xb) \\
+  \int_\Yc \|\yb-\Hc G_\Theta (\yb)\| d\nu(\yb)
\end{split}
\label{eq:our cycle loss}
\end{eqnarray}
\begin{eqnarray}
\begin{split}
&\ell_{WGAN}(\Theta; \Upsilon) \\
&= \left(\int_\Xc \varphi_\Upsilon(\xb)d\mu(\xb) - \int_\Yc \varphi_\Upsilon(G_\Theta (\yb))d\nu(\yb)\right)
\end{split}
\label{eq:our WGAN loss}
\end{eqnarray}
where $\varphi_{\Upsilon}$  is  the Kantorovich 1-Lipschitz potential.
Here, the term $\ell_{cycle}$ is  the cycle-consistency loss,  $\ell_{WGAN}$ is the
Wasserstein GAN (WGAN) loss \cite{arjovsky2017wasserstein}, and $\varphi_\Upsilon$ is often
called the discriminator.
It is important to note that we just have one discriminator in \eqref{eq:our WGAN loss}, since
$\Hc$ is a known deterministic generator and we only require one single CNN generator $G_\Theta(y)$ \cite{sim2019optimal}.

Interestingly, the resulting dual formulation shows that
the physics-driven data consistency term is used to stabilize the training of  the neural network. 
This implies that in contrast to the inverse formulation using deep learning prior \cite{zhang2017learning,aggarwal2018modl}  that
incorporate the physics-driven information during the run-time reconstruction and network architectures,
our cycleGAN formulation shows that the physics-driven constraint
can be incorporated in training a feed-forward deep neural network so that it generates physically
meaningful estimates \cite{sim2019optimal}.

\begin{figure*}[!t] 	
\centerline{\includegraphics[width=15 cm]{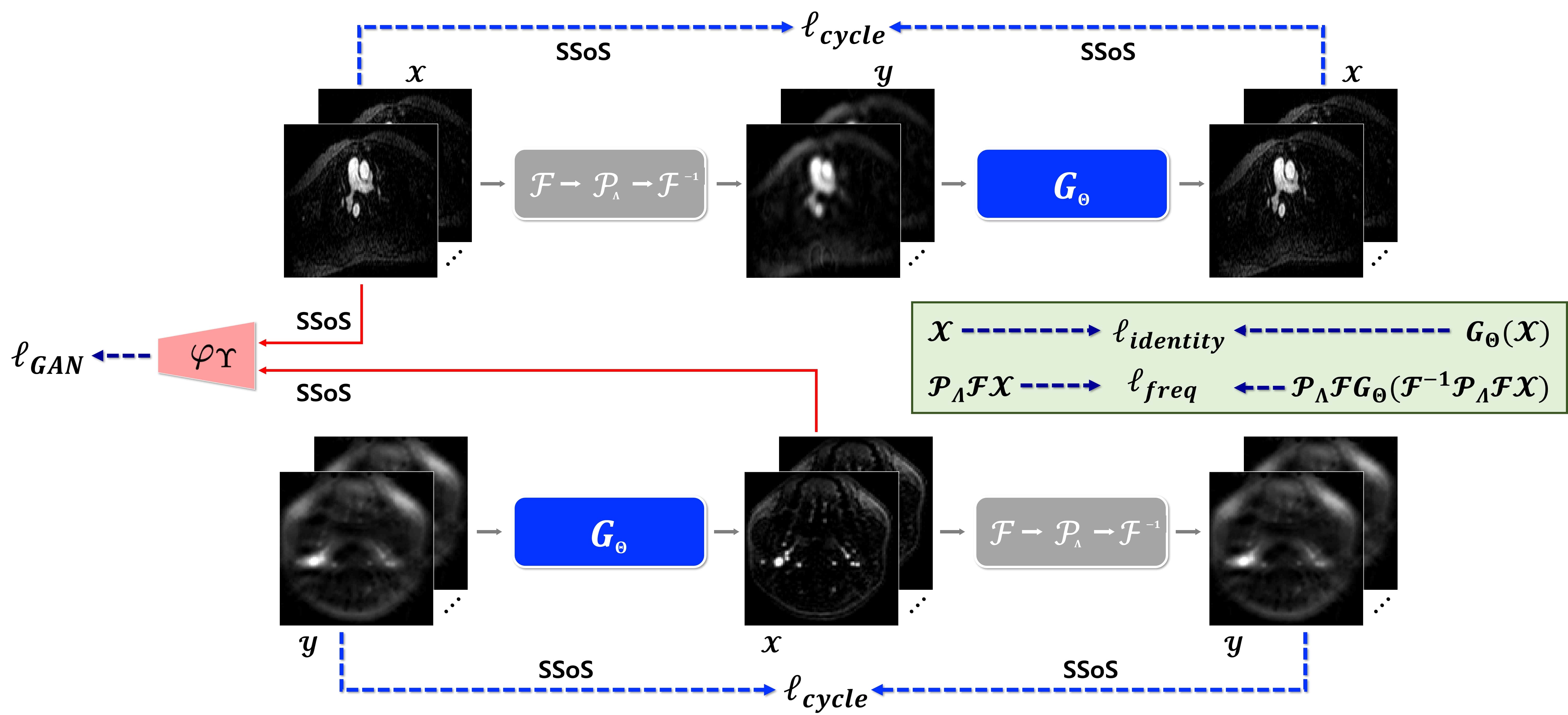}}
\caption{{Overview of the proposed cycleGAN architecture for TWIST imaging with reduced view sharing. There is one pair of generator $G_\Theta$ : $\Yb \to \Xb$ and discriminator $\psi_\Xi$. There are four losses $-$ adversarial loss, cyclic loss, frequency loss, and the identity loss in training the network.}}
\label{fig:overview}
\end{figure*}

\section{Theory}

 \subsection{Problem Formulation }

  To improve the temporal resolution of tMRI,
consider the reduction of the view-sharing number
 as shown in Fig. \ref{fig:TWIST_concept}(b). Specifically, 
for a given under-sampling pattern $\Lambda$ from the reduced view sharing, 
 the forward measurement from $C$-channel receiver coils is given by
\begin{eqnarray}\label{eq:meas}
\widehat \Xb = \Pc_\Lambda \Fc \Xb 
\end{eqnarray}
where
\begin{align*}
 \Xb&=[\xb^{(1)},\cdots, \xb^{(C)}] \\
  \widehat\Xb&=[\hat \xb^{(1)},\cdots, \hat\xb^{(C)}]
 \end{align*}
 where $\xb^{(i)}$ is the $i$-th channel unknown image,
$\Fc$ is the 
2-D Fourier transform, $\Pc_\Lambda$ is the projection to $\Lambda$ that denotes $k$-space 
sampling indices,  and $\hat\xb^{(i)}$ is the corresponding downsampled $k$-space.
In this case,  it is difficult to apply GRAPPA directly due to the irregular under-sampling pattern,
so our goal is to obtain a deep learning approach to address this problem.
However, the main technical difficulty is that it is not possible to obtain full $k$-space
data that can be used as reference for supervised training, since during the full
$k$-space acquisition, the image content $\Xb$ in \eqref{eq:meas} changes due to the
dynamic nature of contrast agent distribution.

\subsection{PLS Transportation Cost for Optimal Transport}

To address this problem, here we employ the unsupervised learning method as an extension of
OT-driven cycleGAN \cite{sim2019optimal}.
For this, we first obtain the aliased reconstruction in the image domain:
\begin{equation}
\label{eq:alias}
\Yb =  \mathcal{F}^{-1} \Pc_{\Lambda} \mathcal{F} \Xb .
\end{equation}
The reason we choose the aliased image domain as $\Yc$ instead of $k$-space measurement is that the construction of the discriminator in the image domain
is well-established.

Using \eqref{eq:meas} and  \eqref{eq:alias},  we extend the PLS transportation cost in \eqref{eq:cost0} to the following formulation:
%
\begin{align}
c(\Xb, \Yb; \Theta) &= d_{I}\left(\Yb,\mathcal{F}^{-1} \Pc_{\Lambda} \mathcal{F} \Xb\right) \label{eq:c1} \\
& + d_{I} \left(\Xb,  G_\Theta(\Yb)\right) \label{eq:c2} \\
&+\alpha d_{I}\left(\Xb, G_\Theta(\Xb) \right) \label{eq:c3} \\
&+ \beta d_{F}\left(\Pc_\Lambda\Fc \Xb, \Pc_\Lambda \Fc G_\Theta(\mathcal{F}^{-1} \Pc_{\Lambda} \mathcal{F}\Xb)\right) \label{eq:c4}
\end{align}
where  $\alpha,\beta>0$ are regularization parameters,
$d_{I}(\cdot,\cdot)$  and $d_{F}(\cdot,\cdot)$  are distance metrics for the image and $k$-space domain, respectively,
whose definition will be discussed later.
More specifically, \eqref{eq:c1} and \eqref{eq:c2} denote  the data fidelity term and deep learning based prior term similar to those in \eqref{eq:cost0}.
The last two terms \eqref{eq:c3} and \eqref{eq:c4}
denote the identity loss for the spatial domain and frequency domain, respectively.
More specifically, \eqref{eq:c3} ensures that  the generator $G_\Theta$ should not alter the fully sampled high resolution image,
and \eqref{eq:c4} enforces that the acquired $k$-space samples should be maintained.

Using \eqref{eq:c1}-\eqref{eq:c4}, the primal form of the optimal transport problem becomes
\begin{eqnarray}
\label{eq:primal}
\mathbb{K}(\Theta) &:=  \min_{\pi} \ \int c(\Xb, \Yb;\Theta) d\pi(\Xb, \Yb),
\end{eqnarray}
Theorem~\ref{thm:dual} then specifies the corresponding 
  Kantorovich dual formulation:
  \begin{theorem}\label{thm:dual}
  For the given primal optimal transport problem in \eqref{eq:primal} with \eqref{eq:c1}-\eqref{eq:c4}, 
  suppose that there exist $\varphi_\Upsilon$ such that 
  \begin{align}\label{eq:bound}
  \|\varphi_\Upsilon(\Xb)-\varphi_\Upsilon(\Xb')\|\leq d_{I}(\Xb,\Xb') ,\quad \forall \Xb,\Xb'
  \end{align}
  then 
  the associated
  Kantorovich dual formulation is given by
\begin{equation*}
\min_{\Theta} \mathbb{K}(\Theta) = \min_{\Theta} \max_{\Upsilon} \ \ell (\Theta; \Upsilon),
\end{equation*}
Here,
\begin{align}
\label{eq:total_cost}
\ell(\Theta; \Upsilon)  &= \gamma\ell_{cycle}(\Theta) + \ell_{WGAN}(\Theta; \Upsilon) \notag \\
&+\alpha \ell_{identity}(\Theta)+\beta\ell_{freq}(\Theta)
\end{align}
where $\gamma$ denotes some hyper-parameter, and
 $\ell_{cycle}$ and $\ell_{WGAN}$ are given by
\begin{align}
\label{eq:simple_cycle_loss}
\ell_{cycle}(\Theta)  &= \int_{\Yc} d_{I}\left(\Yb, \mathcal{F}^{-1} \Pc_{\Lambda} \mathcal{F}G_\Theta(\Yb)\right) d\nu(\Yb) \notag\\
& +    \int_{\Xc} d_{I}\left(\Xb, G_\Theta(\mathcal{F}^{-1} \Pc_{\Lambda} \mathcal{F} \Xb)\right) d\mu(\Xb)  ,
\end{align}
and
\begin{align}
\label{eq:simple_wgan_loss}
\ell_{WGAN}(\Theta,\Upsilon)  &= \int_{\Xc} \varphi_{\Upsilon}(\Xb)d\mu(\Xb) \notag\\
&- \int_{\Yc} \varphi_{\Upsilon}(G_\Theta(\Yb)) d\nu(\Yb) .
\end{align}
Moreover, the image and $k$-space identity loss are given by
\begin{align}\label{eq:identity}
\ell_{identity}(\Theta)  &= \int_{\Xc} d_{I}\left(\Xb, G_\Theta(\Xb) \right)d\mu(\Xb)  
\end{align}
\begin{align}\label{eq:freq}
\ell_{freq}(\Theta)  &= \int_{\Yc} d_{F}\left(\Pc_\Lambda\Fc \Xb, \Pc_\Lambda \Fc G_\Theta(\mathcal{F}^{-1} \Pc_{\Lambda} \mathcal{F}\Xb) \right) d\mu(\Xb)
\end{align}
\end{theorem}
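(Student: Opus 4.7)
The plan is to split the transportation cost $c(\Xb,\Yb;\Theta) = \eqref{eq:c1} + \eqref{eq:c2} + \alpha\,\eqref{eq:c3} + \beta\,\eqref{eq:c4}$ into the piece that genuinely couples $\Xb$ and $\Yb$ and the piece that depends only on $\Xb$, then apply the cycleGAN dual derivation from our companion paper \cite{sim2019optimal} to the coupled piece and push the remaining piece through the $\min_\pi$ trivially. First I would observe that the integrands in \eqref{eq:c3} and \eqref{eq:c4} are functions of $\Xb$ alone. Hence for any $\pi \in P(\Xc\times\Yc)$ whose $\Xc$-marginal equals $\mu$, one has
\begin{equation*}
\int \alpha\, d_{I}(\Xb,G_\Theta(\Xb))\, d\pi(\Xb,\Yb) = \alpha\, \ell_{identity}(\Theta),
\end{equation*}
and analogously the integral of \eqref{eq:c4} against $\pi$ equals $\beta\, \ell_{freq}(\Theta)$. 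These two contributions are $\pi$-independent and therefore pass through $\min_\pi$ unchanged.

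The remaining reduced cost $c_{12}(\Xb,\Yb;\Theta) = d_{I}(\Yb,\mathcal{F}^{-1}\Pc_\Lambda\Fc\Xb) + d_{I}(\Xb,G_\Theta(\Yb))$ is exactly the PLS cost \eqref{eq:cost0} analyzed in \cite{sim2019optimal} once the deterministic forward operator there is instantiated as $\Hc = \mathcal{F}^{-1}\Pc_\Lambda\Fc$. I would therefore invoke the companion paper's result to conclude
\begin{equation*}
\min_\pi \int c_{12}(\Xb,\Yb;\Theta)\, d\pi(\Xb,\Yb) = \max_\Upsilon\big[\gamma\ell_{cycle}(\Theta) + \ell_{WGAN}(\Theta;\Upsilon)\big],
\end{equation*}
where the 1-Lipschitz condition \eqref{eq:bound} on $\varphi_\Upsilon$ is precisely what is needed to make Kantorovich duality for the $d_I$-cost yield the $\ell_{WGAN}$ form \eqref{eq:simple_wgan_loss}. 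Summing the two pieces and taking $\min_\Theta$ on both sides gives the claimed identity $\min_\Theta \mathbb{K}(\Theta) = \min_\Theta \max_\Upsilon \ell(\Theta;\Upsilon)$ with $\ell$ as in \eqref{eq:total_cost}.

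If the companion result were not available, I would reprove it by bounding $\min_\pi \int c_{12}\, d\pi$ from above by evaluating at two convenient deterministic couplings, namely the pushforward of $\mu$ through $\mathcal{F}^{-1}\Pc_\Lambda\Fc$ and the pushforward of $\nu$ through $G_\Theta$; each choice annihilates one of the two $d_I$ summands and leaves the other summand equal to one of the integrals appearing in $\ell_{cycle}$. The matching lower bound comes from applying the standard Kantorovich--Rubinstein duality to $\int d_I(\Xb,G_\Theta(\Yb))\, d\pi$, using \eqref{eq:bound} to replace this term by the dual $\max_\Upsilon$ expression in \eqref{eq:simple_wgan_loss}.

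The main obstacle is precisely this tightness step: showing that the upper bound from the two deterministic couplings coincides with the Kantorovich lower bound, and accounting for the hyperparameter $\gamma$ that weights the cycle term. This is resolved in \cite{sim2019optimal} by exploiting the fact that $\mathcal{F}^{-1}\Pc_\Lambda\Fc$ and $G_\Theta$ are deterministic maps, so that the optimal plan concentrates on their graphs and the dual is attained at the Wasserstein-$1$ form between $\mu$ and $G_{\Theta\,\#}\nu$. Once that machinery is imported, the new identity and frequency contributions \eqref{eq:identity}--\eqref{eq:freq} add to the dual in a purely additive fashion since they were already $\pi$-independent, completing the proof.
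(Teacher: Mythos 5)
Your main route --- splitting $c$ into the coupled part \eqref{eq:c1}--\eqref{eq:c2} and the $\Xb$-only part $\alpha\,\eqref{eq:c3}+\beta\,\eqref{eq:c4}$, passing the latter through $\min_\pi$ as a constant, and invoking the companion paper for the coupled part with $\Hc=\Fc^{-1}\Pc_\Lambda\Fc$ --- is exactly the structure of the paper's proof, which itself states it is ``a direct extension of the proof in \cite{sim2019optimal}'' and reproduces that argument only for self-containment. So at that level the proposal is fine.

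However, the self-contained backup argument you sketch has a genuine flaw. You propose to get the upper bound on $\min_\pi\int c_{12}\,d\pi$ by ``evaluating at two convenient deterministic couplings, namely the pushforward of $\mu$ through $\Fc^{-1}\Pc_\Lambda\Fc$ and the pushforward of $\nu$ through $G_\Theta$.'' Neither of these is an admissible transport plan: $(\mathrm{id},\Hc)_\#\mu$ has $\Yc$-marginal $\Hc_\#\mu$, not $\nu$, and $(G_\Theta,\mathrm{id})_\#\nu$ has $\Xc$-marginal $G_{\Theta\#}\nu$, not $\mu$, so evaluating the primal at them does not bound $\mathbb{K}_{XY}$. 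The paper avoids this by working entirely in the dual: it writes the Kantorovich dual with \emph{two} potentials $\varphi,\psi$ and their $c$-transforms, and obtains the upper bound by substituting the specific points $\Xb=G_\Theta(\Yb)$ and $\Yb=\Fc^{-1}\Pc_\Lambda\Fc\Xb$ into the infima defining $\varphi^c$ and $\psi^c$ --- a step that is always legitimate. You also assert an exact equality $\min_\pi\int c_{12}\,d\pi=\max_\Upsilon[\gamma\ell_{cycle}+\ell_{WGAN}]$, whereas the paper only establishes the sandwich $|\mathbb{K}_{XY}(\Theta)-\mathbb{D}_{XY}(\Theta)|\leq\frac{1}{4}\ell_{cycle}(\Theta)$ and then recasts the primal as a constrained problem with $\ell_{cycle}(\Theta)=0$, so that $\gamma$ arises as a Lagrange multiplier rather than from tight duality; your ``optimal plan concentrates on the graphs'' heuristic does not substitute for this step. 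Finally, you omit the argument for why only a single discriminator survives: the second potential $\psi_\Xi$ appears in a term $\ell_{dual}(\Xi)$ that is independent of $\Theta$ because the forward operator is fixed and deterministic, and therefore drops out of the min-max. If you rely on \cite{sim2019optimal} these points are covered there, but your stand-alone sketch as written would not compile into a proof.
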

\begin{proof}
See Appendix.
\end{proof}

\subsection{Choice of Metric}

The metric in the image and the $k$-space domain should be defined
according to their inherent properties.
For example, the $k$-space data for each
channel should be measured separately to keep the integrity of the Fourier data.
So the difference are calculated for each channel and added back together to obtain the total error. This leads to the choice
of  $d_{F}(\cdot)=\|\cdot\|_F$, i.e. the Froebenius norm.
On the other hand,
to deal with the multi-channel images in which each channel image is sensitive to specific spatial locations,
we define the metric $d_{I}(\cdot,\cdot)$ as follows:
\begin{eqnarray}\label{eq:Sc}
 d_{I} \left(\Xb,  \Xb'\right) = \|\Sc(\Xb) - \Sc(\Xb)'\|
\end{eqnarray}
%
where 
 $\Sc$  denotes the element-wise square-root of sum of squares (SSoS) operation
such that $\zb=\Sc(\Xb)$ is composed of elements
\begin{align}\label{eq:sos}
z_n =  \left(\sum_{i=1}^C |x^{(i)}_n|^2 \right)^{\frac{1}{2}}
\end{align}
where $x_n^{(i)}$ is the $n$-th elements of the $i$-th coil image.
By measuring the image domain difference in terms of SSoS images,
we found that the algorithm is less prone to each specific coil map.
Furthermore, using the definition \eqref{eq:Sc}, we can easily see that
\begin{eqnarray}
 \|\varphi_\Upsilon(\Sc(\Xb))-\varphi_\Upsilon(\Sc(\Xb'))\|\leq d_{I} \left(\Xb,  \Xb'\right)
\end{eqnarray}
if we choose $\varphi_\Upsilon$ as 1-Lipschitz function with respect to the SSoS image.
The overall flowchart of the proposed cycleGAN architecture is illustrated in Fig. \ref{fig:overview}.

\section{Method}


\subsection{Training dataset}
Twelve sets of in vivo 3D DCE data were acquired with Siemens 3T Verio scanners using TWIST sequence in Gachon University Gil Medical center. The area of scans were fixed to visualize carotid vessels.
  Among twelve sets of given data, there were three different sets of scan parameters. The first eight sets have repetition time (TR) = 2.74 ms, echo time (TE) = 1.02 ms, 241$\times$640$\times$106 matrix, 1.0 mm slice thickness, 16 coils, and 16 temporal frames. The next two sets have TR = 2.5 ms, TE = 0.94 ms, 159$\times$640$\times$80 matrix, 1.2 mm slice thickness, 16 coils, and 30 temporal frames. Finally, for the last two sets, the acquisition parameters were the same as the two sets described beforehand, with the only difference in 2.5 mm slice thickness and 37 temporal frames. Specific sampling patterns are described  in Fig. \ref{fig:TWIST_concept}(a), where 24$\times$24 size ACS region were obtained for the autocalibration of 2D GRAPPA kernel. Furthermore, data were acquired using partial Fourier scheme such that only 63$\%$ of the data was acquired. 
For GRAPPA reconstruction, high frequency $k$-space data of five time frames, ($B_{i-2}, \cdots, B_{i+2}$) are combined with four center sampled
  frames ($A_{i-1}, \cdots, A_{i+1}$), to generate a single $k$-space which is down-sampled by a factor 3 and 2 along $k_y$ and $k_z$ directions, respectively.  Then,  2-D GRAPPA \cite{griswold2002generalized} is used to estimate missing $k$-space data. 
Accordingly, the resulting sliding window size  is 9 frames so that temporal blurring is unavoidable.
  On the other hand, images in the under-sampled domain were generated utilizing reduced view sharing as depicted in Fig. \ref{fig:TWIST_concept}(b). Out of twelve patient data, four sets were used for training, which corresponds to 33,890 slices of training data. The remaining eight sets were used for testing, which corresponds to 44,850 slices of test data.

\begin{figure}[!hbt] 	
\centerline{\includegraphics[width=9cm]{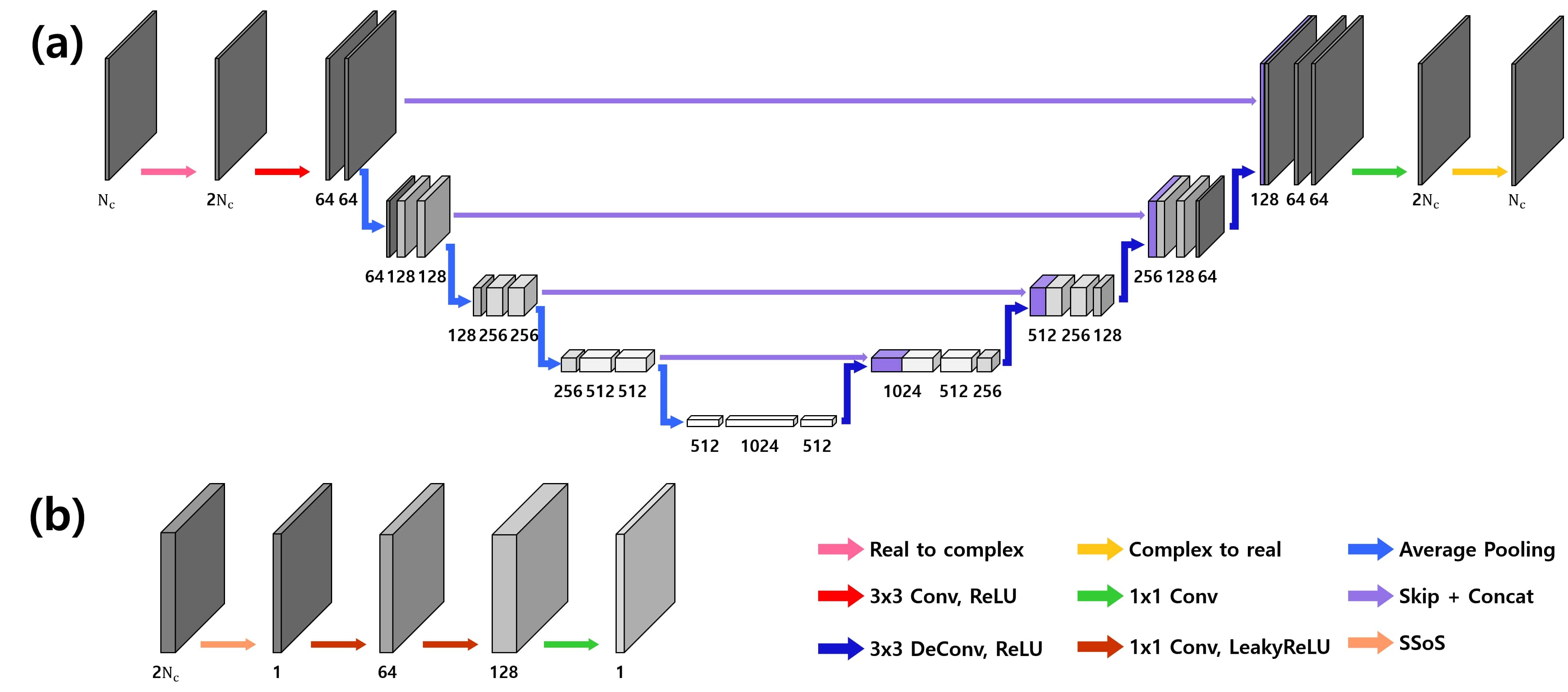}}
\caption{Network architecture of (a) generator $G_\Theta$ and (b) discriminator $\varphi_\Upsilon$.}
\label{fig:network}
\end{figure}

\subsection{Network architecture}

\begin{figure*}[!hbt] 	
\center{ 
\includegraphics[width=18cm]{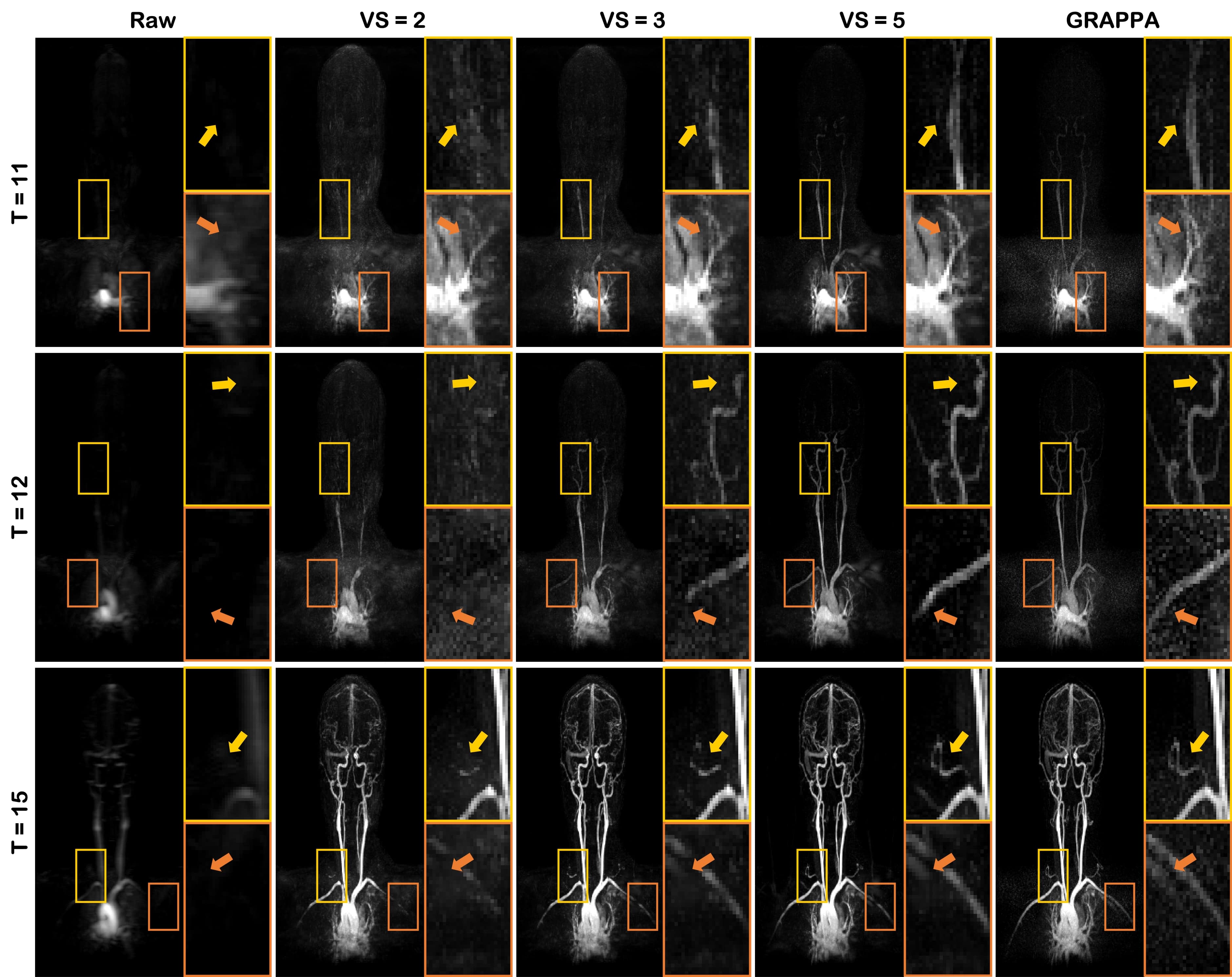}
}
\caption{Temporal resolution comparison of raw data,  and reconstruction results by GRAPPA with VS=5 and the proposed method at various VSs. Here, VS stands for the number of view sharing.}
\label{fig:temporal_resol}
\end{figure*}

Our network consists of a single generator and a single discriminator. The generator $G_\Theta$ adopts standard U-Net architecture \cite{ronneberger2015u}, which has large receptive field and and is widely used in various tasks. More specifically, our network consists of blocks that are comprised of 3$\times$3 convolution, instance normalization \cite{ulyanov2016instance}, and rectified linear unit (ReLU) \cite{nair2010rectified}. Skip connections are also added with concatenation for easier gradient flow. On each stage, three blocks of convolution-instance norm-ReLU are placed except for the last stage which consists of a single $1 \times 1 $ convolution layer (green arrow in Fig. \ref{fig:network}(a)). Detailed depiction of the network can be found in Fig.  \ref{fig:network}(a). As for the pooling, we adopted $2 \times 2$ average pooling. The number of convolutional filters were set to 64 at the first stage, which increased 2-fold at each stage and reached 1024 at the final stage.
 To handle the inherent nature of MR images that are complex,  the real and imaginary part of the complex data are stacked in the channel dimension. Accordingly, the number of input channels was set to 32 (16 coils $\times$2 = 32).

For the discriminator $\varphi_\Upsilon$ illustrated in Fig. \ref{fig:network}(b), we adopt $1 \times 1$ PatchGAN discriminator from \cite{isola2017image}. For stable and efficient training, our discriminator takes a single channel image data constructed by the SSoS operator from complex multi-coil image.
 To illustrate further, our discriminator contains  three $1 \times 1$ convolution layers followed by instance normalization and LeakyReLU. The first convolutional layer consists of 64 sets of $1 \times 1 $ convolution layer, and the number of convolution kernels in the second layer is 128. At the last layer, $1 \times 1$ convolution layers is appended to compute the final feature map.

\subsection{Network training}

The hyper-parameters in   \eqref{eq:total_cost} were set to $\alpha=1.0$, $\beta=2.0$ and $\gamma=2.0$.
Aliased images from sub-sampled $k$-space data with various view sharing factors  (see Figs.~\ref{fig:TWIST_concept}(b)) were used for data in domain $\Yc$, whereas reconstructed multi-coil images using 2-D GRAPPA which utilizes all the view-sharing were used for $\Xc$ since it generates
best spatial resolution images. Note that the two domains do not match each other, since the temporal resolution are different.
The down-sampling mask $\Pc_\Lambda$ was selected randomly from one of the view sharing numbers (VS) = 2, 3, or 5. Accordingly, we could use all three types of sub-sampled data with VS=2, 3, and 5  as inputs 
so that we can compare various levels of spatial- and temporal- resolution to analyze its trade-off.
For GRAPPA reconstruction, parameters were chosen such that optimal results could be obtained. Here, we use kernel size of 5$\times$5 for interpolation.

For optimization, our network was trained with Adam optimizer \cite{kingma2014adam} with momentum $\beta _1 = 0.5$ and $\beta_2 = 0.999$. The network was trained for 50 epochs which consists of two phases. In the first phase, learning rate was specified to 0.001 for 10 epochs. Next, in the second phase learning rate was linearly degraded to 0 for the remaining 40 epochs. At each step, generator was updated 5 times per single update of the discriminator. Our batch size was set to 1. For pre-processing, each individual image was normalized with the standard deviation of the under-sampled image. The proposed network was implemented in Python using PyTorch library \cite{paszke2017automatic} and trained using an NVidia GeForce GTX 1080-Ti graphics processing unit. It took about 6 days for the network training.

\subsection{Comparative Studies}

As for CS reconstruction methods, ALOHA \cite{jin2016general} and k-t SLR \cite{lingala2011accelerated} were chosen for comparison.
Reconstruction parameters for ALOHA were given as follows: annihilating filter size = 13$\times$5 , 3 levels of pyramidal decomposition, decreasing LMaFit tolerance values ($10^{-3}, 10^{-4}, 10^{-5}$) at each level, and ADMM parameter $\mu$ = $10^{-1}$. For k-t SLR, parameters were determined as: the value of p in Schatten p-norm = 0.1, regularization parameter for Schatten p-norm $\mu_1 = 10^{-10}$, and the number of outer iterations = 15.

To quantitatively compare the performance measure of the proposed algorithm as opposed to algorithms presented for comparative study, the peak signal-to-noise ratio (PSNR) and structural similarity (SSIM) index \cite{wang2004image} were computed following the standards. Reconstructed outputs of our network is multi-coil complex-valued data whereas in clinical situations we need a single magnitude image. Therefore, we perform the SSoS to the multi-coil data to obtain a single image for comparison. We calculate the metrics between the images after SSoS operation.
More specifically, PSNR is defined as follows:
\begin{eqnarray}
	PSNR 
		 &=& 20 \cdot \log_{10} \left(\dfrac{MAX_{\zb}}{\sqrt{MSE(\widetilde{\zb}, \zb)}}\right), 
\label{eq:psnr}		 
\end{eqnarray}
where $\widetilde{\zb}$ stands for reconstructed sum-of-squares image and $\zb$ holds for noise-free sum-of-squares image (ground truth). $MAX_{\zb}$ is the maximum pixel intensity of the ground truth.
The SSIM is defined to better capture the perceptual similarity between the original image and the distorted image, and is defined as follows:
\begin{equation}
	SSIM = \dfrac{(2\mu_{\widetilde{\zb}}\mu_{\zb}+c_1)(2\sigma_{\widetilde{\zb}\zb}+c_2)}{(\mu_{\widetilde{\zb}}^2+\mu_{\zb}^2+c_1)(\sigma_{\widetilde{\zb}}^2+\sigma_{\zb}^2+c_2)},
\end{equation}
where $\mu_{\mb}$, $\sigma_{\mb}^2$, and $\sigma_{\mb\nb}$ denote average and variance of $\mb$, and covariance of $\mb$ and $\nb$, respectively. For numeric stability, $c_1=(k_1R)^2$ and $c_2=(k_2R)^2$ are added where $R$ is the dynamic range of pixel values. We keep the default values $k_1 = 0.01$ and $k_2 = 0.03$.

\section{Results}\label{sec:result}

\subsection{In Vivo Results}

Reconstruction results of the carotid vessel data sets from in vivo acquisition are demonstrated in Fig. \ref{fig:temporal_resol}. 
The temporal frames were chosen to illustrate the propagation of the contrast agent and to compare the temporal resolution. In the proposed method, a single neural network trained with various view sharing numbers is sufficient to provide VS-agnostic results. Thus, we provide reconstruction results with all different view sharing - 2, 3, and 5. Raw data shown in Fig. \ref{fig:temporal_resol} were obtained by directly applying inverse Fast Fourier Transform (FFT) to the $k$-space data without view sharing, which reflects the true temporal resolution despite its low visual quality at each time frame.

\begin{figure}[!hbt] 	
\center{ 
\includegraphics[width=8cm]{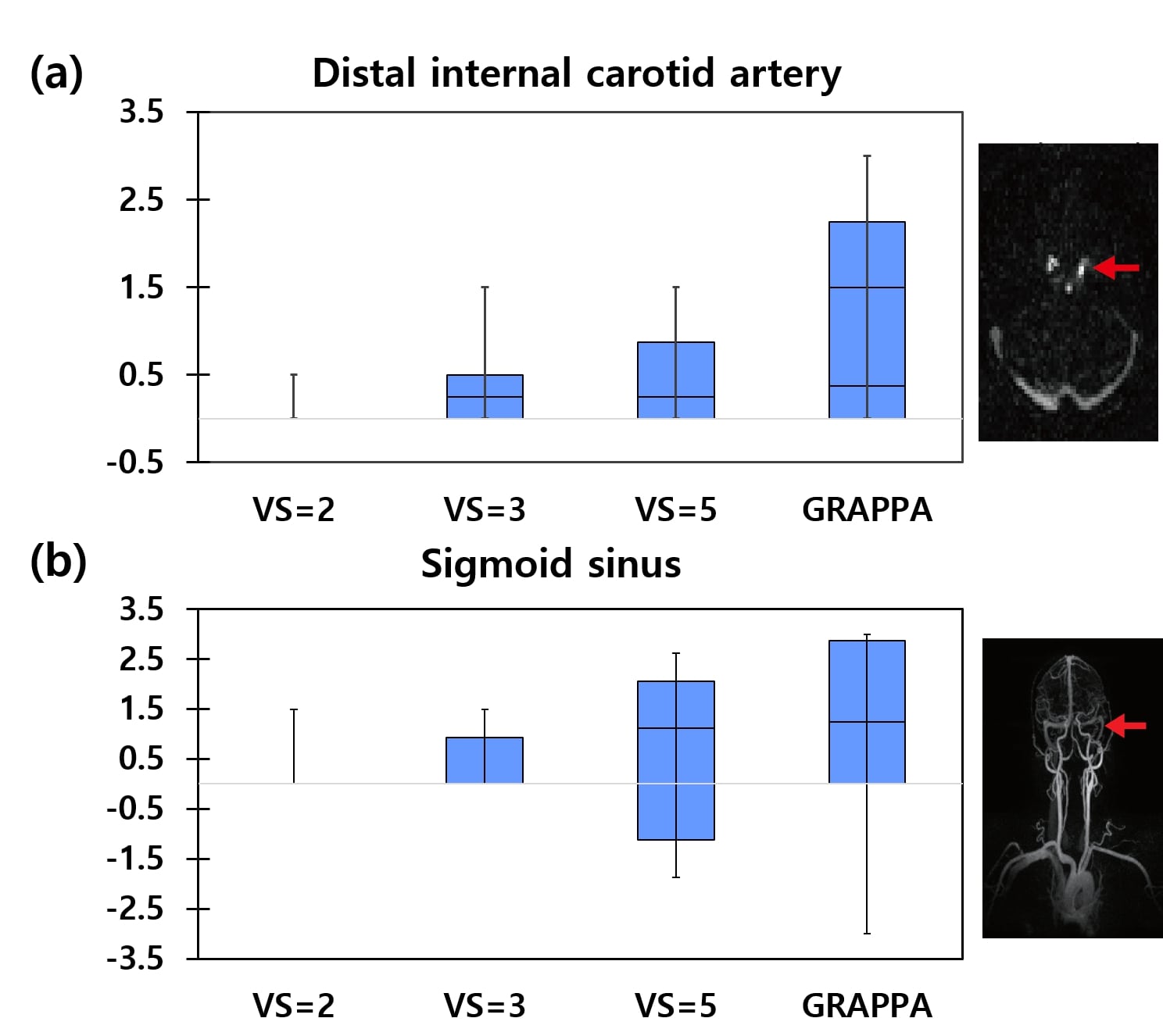}
}
\caption{Box plot of relative start point to peak intensity of distal internal carotid artery and sigmoid sinus with respect to VS. Here, the statistics were calculated using eight patient data sets. {The distal internal carotid artery and the sigmoid sinus are indicated by the red arrows.}
}
\label{fig:temporal_dynamics}
\end{figure}

\begin{figure*}[!hbt] 	
\center{ 
\includegraphics[width=17cm]{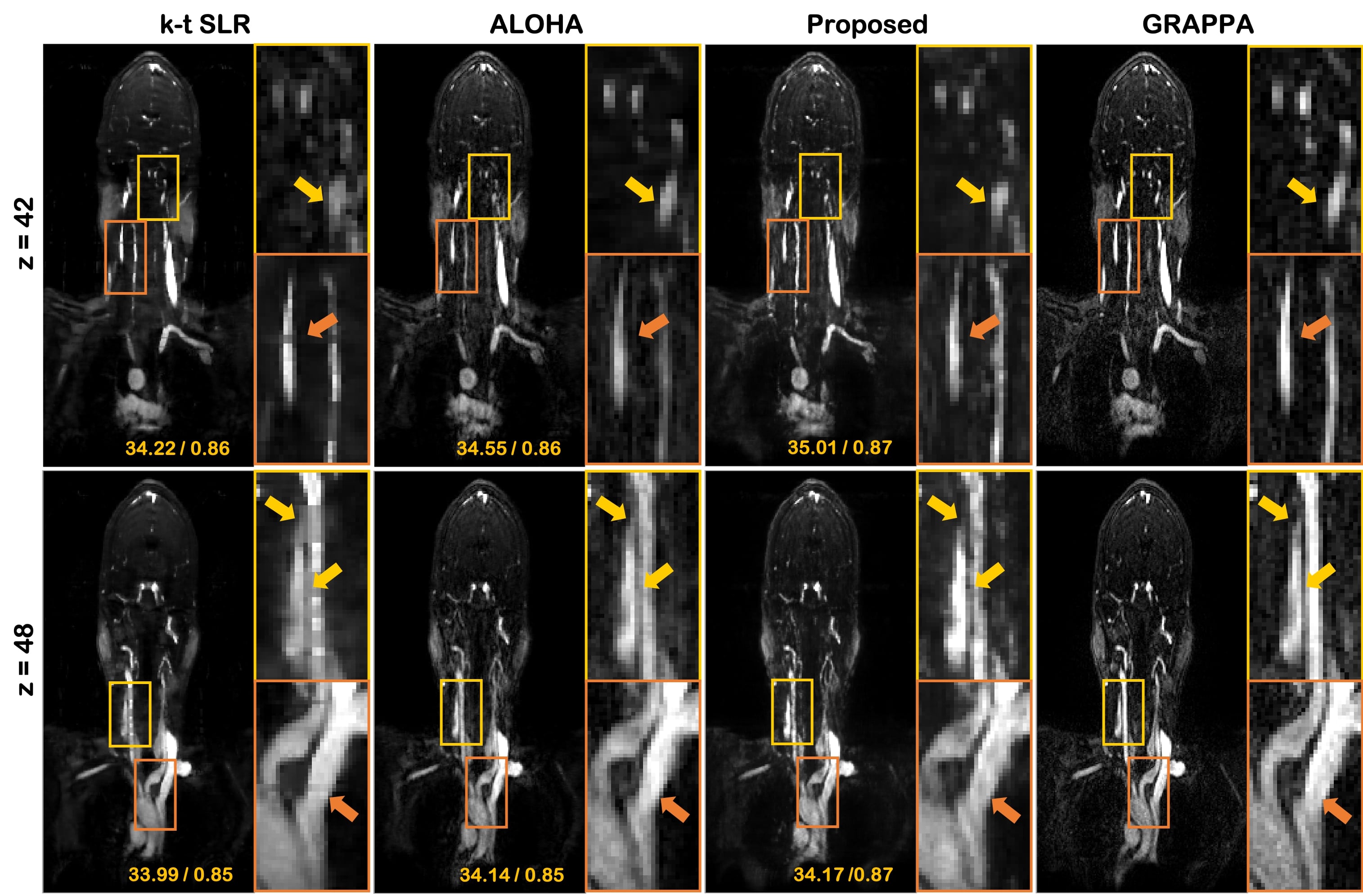}
}
\caption{{Coronal views of reconstruction results of raw data, k-t SLR, ALOHA, GRAPPA and proposed method. 
For the reconstruction using k-t SLR, ALOHA, and the proposed method, sub-sampled data with VS=2 were used. Since GRAPPA can only be applied to the regular sampling pattern, we used the sub-sampled data with VS=5 for GRAPPA reconstruction. The values in the corners are PSNR / SSIM index for individual image with respect to the GRAPPA reconstruction. Due to the lack of ground-truth, the GRAPPA reconstruction
with the most similar images are used as reference in calculating PSNR and SSIM.}}
\label{fig:compare_algorithm}
\end{figure*}

By inspection, we can see that the flow of contrast agent is rather abrupt in the GRAPPA reconstruction. For instance, there was a rapid propagation of contrast agent from the $T=11$ frame to  the $T=12$ frame as shown in Fig. \ref{fig:temporal_resol}. This was due to the sliding window combination of several frames prior to the application of GRAPPA. Therefore, results acquired from GRAPPA reconstruction fail to follow the true temporal dynamics faithfully. Degradation of temporal dynamics gets even severer as number of view-sharing increases.
On the other hand, in the reconstructed images using the proposed method, the flow of the contrast agent was captured to a fine temporal scale with VS=2. Minor temporal blurring can be seen when VS=3 as shown in Fig.~\ref{fig:temporal_resol}. With VS=5, which is equal to the view sharing number used in GRAPPA, the spatio-temporal resolution of the proposed method were nearly identical to the results of GRAPPA.
More specifically, inspecting the raw images carefully we can observe the location of contrast agent flow at each time frame. As the number of view sharing increases, which corresponds to severer temporal blending, flow appear at locations where it should not be seen. For instance, at $ T = 12 $ we can clearly see the temporal resolution degradation with the increase in the number of view sharing. With GRAPPA reconstruction, the detail of the spread of the contrast agent was influenced by the high-frequency region in the $k$-space so that the temporal dynamics of the future frame was erroneously incorporated in the current frame. On the other hand, our method provides high spatial resolution images for various view sharing numbers, although the spatial resolution improves with more view sharing at the cost of temporal blurring. Since  we can reconstruct high quality results even from low view sharing factor, we could investigate various spatio-temporal resolution trade-off.

For quantitative analysis, the difference between the start point to peak intensity of the reconstructed images and that of the raw data was calculated using the eight patient data sets for detailed comparison of temporal dynamics. Distal internal carotid artery and sigmoid sinus were chosen for the box plots of the relative start point to peak intensity as shown in Fig.~\ref{fig:temporal_dynamics}.
As the number of view sharing frames increases, temporal dynamics is degraded. 
In addition, the temporal dynamics of reconstructed images with VS=5 using the proposed network had a similar or better tendency to those using GRAPPA, which implies the enhancement of temporal resolution in the reconstructed results using the proposed unsupervised learning.

\begin{figure*}[!hbt] 	
\center{ 
\includegraphics[width=18cm]{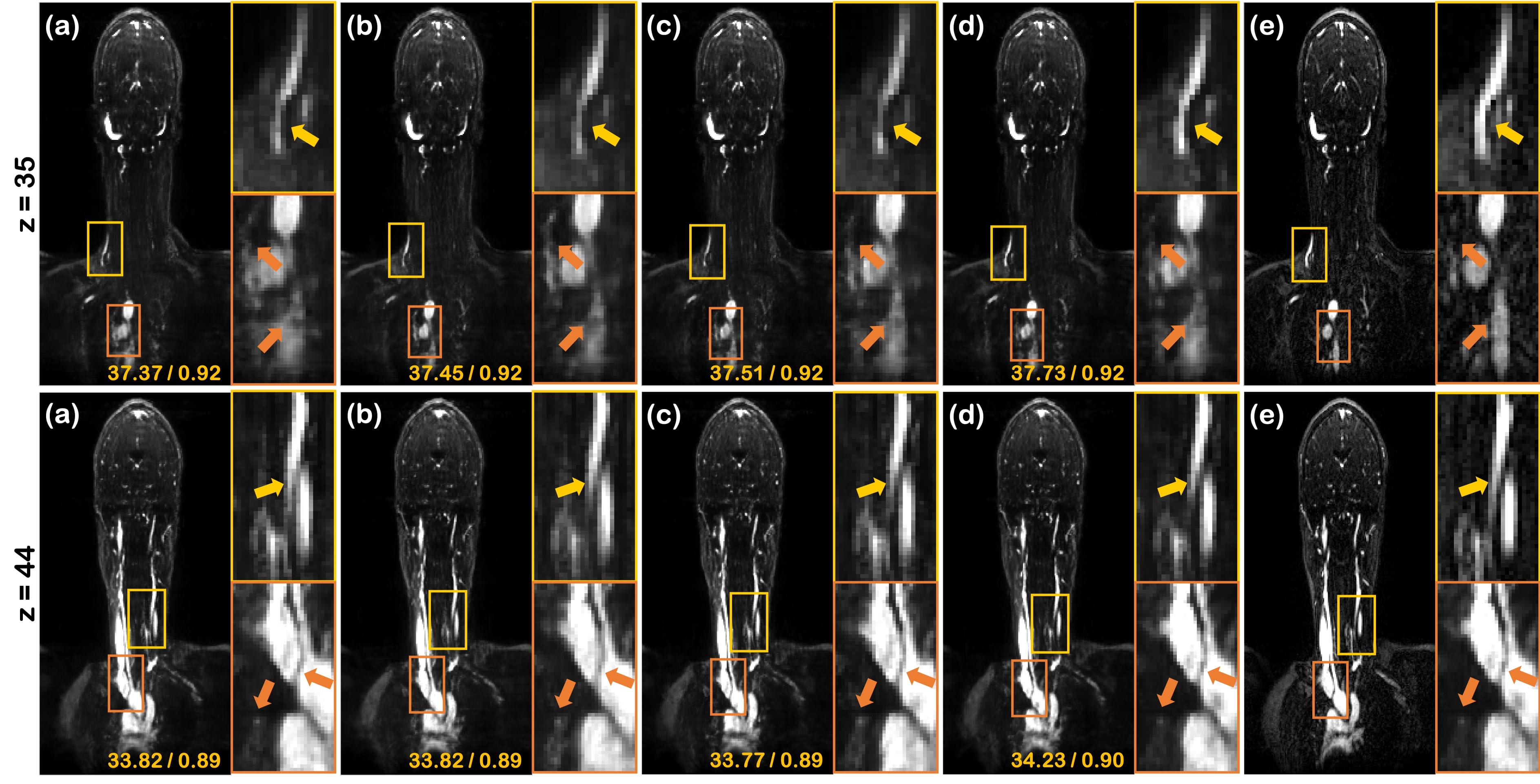}
}
\caption{Coronal view of reconstruction results using an ablated network without (a) $\ell_{freq}$ and $\ell_{identity}$, (b) $\ell_{identity}$,  and (c) $\ell_{freq}$, respectively. Reconstruction results by (d) the proposed method, and (e) GRAPPA, respectively. GRAPPA is reconstructed
with VS=5 and the others are reconstructed with VS=2.  The values in the corners are PSNR / SSIM index for individual image with respect to the GRAPPA reconstruction. Due to the lack of ground-truth, the GRAPPA reconstruction
with the most similar images are used as reference in calculating PSNR and SSIM.}
\label{fig:ablation}
\end{figure*}

\subsection{Comparison with existing algorithms}

We further validate the effectiveness of our unsupervised learning method in contrast to the state-of-the-art CS approaches - ALOHA and k-t SLR. We compared the performance between the CS approaches and the proposed method with VS=2.
As shown in Fig. \ref{fig:compare_algorithm}, reconstructed images with k-t SLR can be characterized with blurriness  so that the details in reconstructed images with k-t SLR cannot be distinguished well. On the other hand,  images reconstructed with ALOHA are sharper, and this sharpness is further improved using the proposed method, which is crucial in clinical situations.

Due to the lack of the ground-truth, exact quantification of the performance was difficult.
Instead,  we calculated the PSNR and SSIM index with respect to the GRAPPA reconstruction with VS=5 that has the most similar
reconstruction. The main assumption is that after the saturation of the contrast agent, the differences in the image may be small. 
 As shown in Fig.~\ref{fig:compare_algorithm}, in terms of quantitative metrics - PSNR and SSIM index, our unsupervised learning consistently outperforms ALOHA. Specifically, our method outperforms ALOHA by 0.03$\sim$0.46 dB in PSNR and exceeds k-t SLR with a large margin. 
The acceleration factor of the data to which the proposed method is applied is $ R = 13.93$, which is much higher than $ R = 6.90$ used for GRAPPA reconstruction. However, the proposed method offers a spatial resolution comparable to that of the GRAPPA reconstruction.
Furthermore, the computational time for slice using the proposed method was about 0.01 sec, while that the computational times for
 ALOHA, k-t SLR were 84.61, and 217.57 sec, respectively. This confirmed that the proposed method is computationally more efficient than ALOHA and k-t SLR.

\section{Discussion}
\subsection{Ablation study}

To analyze the role of individual losses in the proposed method, ablation studies were performed, by excluding the frequency loss $\ell_{freq}$ and/or the identity loss $\ell_{identity}$ under the same training conditions. The reconstruction results at $R=13.93$ acceleration are illustrated in Fig. \ref{fig:ablation}.  As we do not have ground-truth image, 
{the most similar reconstruction images using GRAPPA with VS=5} are illustrated in in the last column.

 The results provided by the network trained without $\ell_{freq}$ and $\ell_{identity}$ are demonstrated in the first column. In the second column and third column, the results excluding $\ell_{identity}$ and $\ell_{freq}$, respectively, are illustrated. The reconstructed images using the proposed method are shown in the fourth column. 
The networks trained without $\ell_{freq}$ and $\ell_{identity}$ (Fig. \ref{fig:ablation}(a)), and $\ell_{freq}$ (Fig. \ref{fig:ablation}(c)) provided more blurry results than the proposed method (Fig. \ref{fig:ablation} (d)). This confirmed that enforcing the frequency domain identity loss
to maintain the acquired $k$-space samples can help to reconstruct the details.
 By comparing Fig. \ref{fig:ablation}(b) and Fig. \ref{fig:ablation}(d), it was shown that the region with contrast dynamics can be reconstructed well without any deformation by the proposed method.
 In terms of PSNR and SSIM, the proposed network persistently outperformed the ablated  networks.
 More specifically, our method is about $0.4\sim 1.2$dB better, which demonstrated the importance of $\ell_{identity}$ and $\ell_{freq}$ for our unsupervised learning.

\begin{figure}[!hbt] 	
\center{ 
\includegraphics[width=8cm]{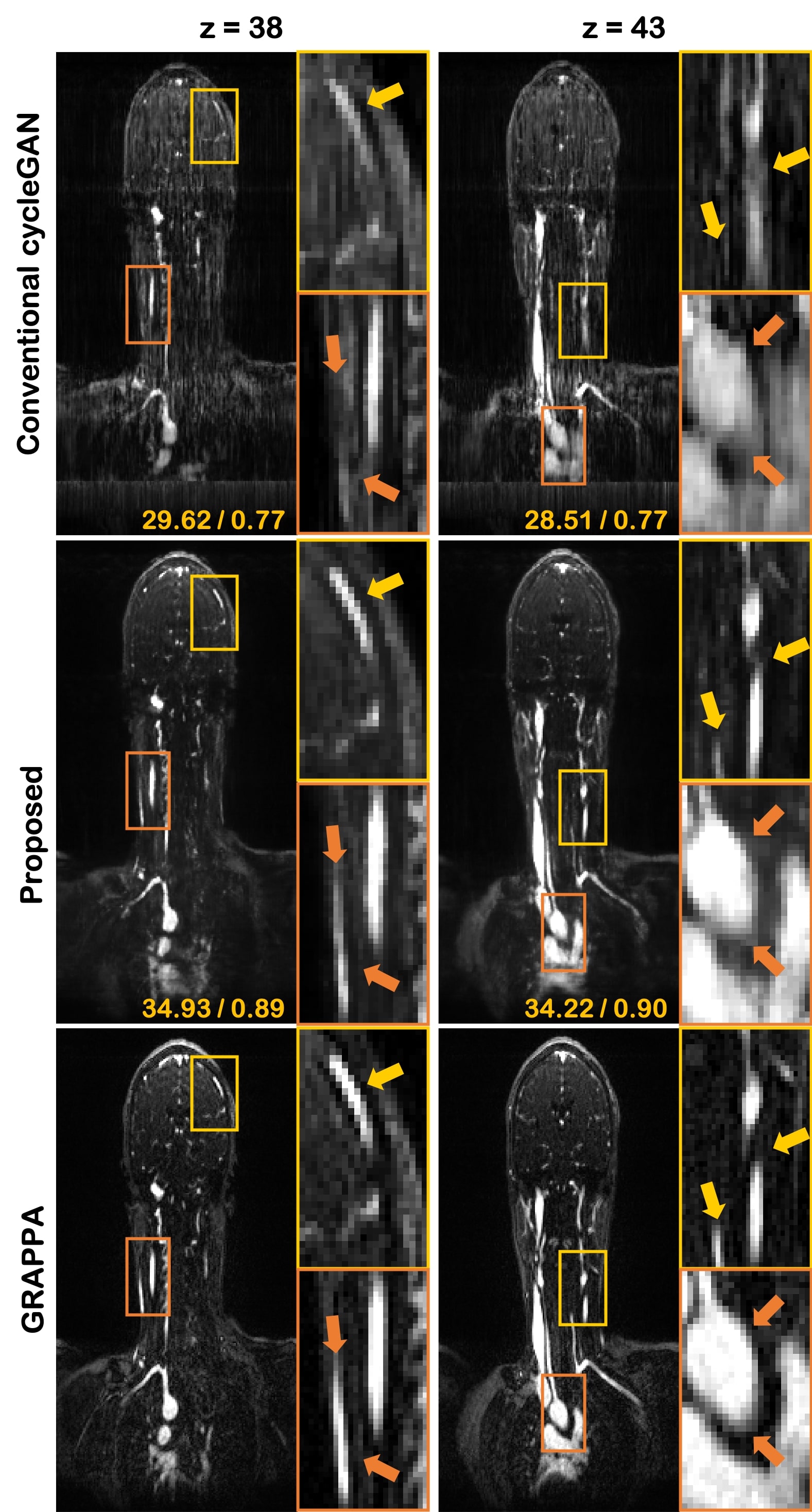}
}
\caption{{Coronal views of reconstruction results using conventional cycleGAN, GRAPPA, and the proposed method. For the reconstructions using conventional cycleGAN and the proposed method, sub-sampled data with VS=2 were used. Since GRAPPA can only be applied to the regular sampling pattern, we used the undersampled data with VS=5 for GRAPPA reconstruction. The values in the corners are PSNR / SSIM index for individual image with respect to the GRAPPA reconstruction. Due to the lack of ground-truth, the GRAPPA reconstruction
with the most similar images are used as reference in calculating PSNR and SSIM.}}
\label{fig:compare_cycleGAN}
\end{figure}

\subsection{Comparison with conventional cycleGAN}

Fig. \ref{fig:compare_cycleGAN} shows the results by the conventional cycleGAN \cite{isola2017image} and the proposed method using
the data with VS=2. Again, due to the
lack of the ground-truth data, we illustrate the GRAPPA reconstruction with VS=5 that shows the most similar results.  Our optimal transport driven cycleGAN consists of a single pair of generator and discriminator, while the standard cycleGAN consists of two pairs of generator and discriminator. As shown in Fig. \ref{fig:compare_cycleGAN}, the performance of the proposed cycleGAN was significantly better than that of the conventional cycleGAN.
More specifically, the intensity of the vessels in the reconstruction using standard cycleGAN is much weaker compared to the proposed method and GRAPPA. In addition, the details of the vascular structures, which are crucial for an accurate diagnosis, are not well recognizable in the reconstructions with the standard cycleGAN, while are clearly visible in the reconstructions with the proposed cycleGAN.
Since our dataset do not contain any ground-truth, PSNR and SSIM index were calculated with respect to the GRAPPA reconstruction with VS=5 for quantitative comparison between the conventional cycleGAN and the proposed cycleGAN.  The proposed method is about 5.3$\sim$5.7 dB better  compared to the conventional cycleGAN in terms of PSNR.
This result is also consistent with SSIM index  improvement.

This improvement is due to  more stable training in our network architectures, since we only have
 a single pair of  generator and discriminator, whereas in the standard cycleGAN two pairs are necessary which makes the training more difficult.

\section{Conclusion}\label{sec:conclusion}

In this paper, we proposed a novel unsupervised learning method  to improve the temporal resolution of tMRA imaging
and generate reconstruction results at  diverse numbers of view sharing.
In particular, we proposed a novel cycleGAN architecture which requires only a single pair of generator and discriminator by exploiting the deterministic undersampling operation. 
We performed various simulation and in vivo experiments to verify the efficacy of the proposed method for TWIST imaging. Using the proposed unsupervised learning, it was demonstrated that the undersampled images with reduced view sharing can be properly reconstructed, resulting in the improvement of temporal resolution of TWIST imaging. 
Our network can provide various reconstruction results by easily altering the number of view sharing at the inference stage.
Moreover, the proposed method can be used with the existing TWIST acquisition protocol without any modification of the pulse sequence, 
despite the significantly small  computational complexity. Since the matching pairs of downsampled and fully sampled data were not required in our method,  our method may provide an important new research direction that can significantly extend the clinical applications of tMRA.

\appendix
\section*{Proof of Theorem~\ref{thm:dual}}

The proof is a direct extension of the proof in \cite{sim2019optimal}, but we include the following for self-containment.

Using the cost $ c(\Xb,\Yb;\Theta)$ given by Eqs.~\eqref{eq:c1}-\eqref{eq:c4},
the primal problem \eqref{eq:primal} becomes
\begin{align}
\Kd(\Theta)=&\min_\pi \int_{\Xc\times \Yc} c(\Xb,\Yb;\Theta)d\pi(\Xb, \Yb) \\
=& \int_{\Xc\times \Yc}  c_{XY}(\Xb,\Yb) d\pi^*(\Xb, \Yb)  \\
&+ \int_{\Xc}  c_X\left(\Xb \right)  d\mu^*(\Xb) \label{eq:Kdorg}
\end{align}
where $\pi^*$ and $\mu^*$ denote the optimal joint measure and marginal. Furthermore,
\begin{align*}
 c_{XY}(\Xb,\Yb) = & d_{I}\left(\Yb,\mathcal{F}^{-1} \Pc_{\Lambda} \mathcal{F} \Xb\right) +d_{I} \left(\Xb,  G_\Theta(\Yb)\right)\\
  c_X\left(\Xb \right) = &\alpha d_{I}\left(\Xb, G_\Theta(\Xb) \right) \\
  &+\beta d_{F}\left(\Pc_\Lambda\Fc \Xb, \Pc_\Lambda \Fc G_\Theta(\mathcal{F}^{-1} \Pc_{\Lambda} \mathcal{F}\Xb)\right).
\end{align*}
Here, the minimization with respect to the marginal $\mu(\Xb)$ is simple, and the technical difficulty
lies in the minimization with respect to the joint measure $\pi(\Xb,\Yb)$.
According to the Kantorovich dual formulation \cite{villani2008optimal}, we have
\begin{align*}
\Kd_{XY}(\Theta):= & \int c_{XY}(\Xb,\Yb) d\pi^*(\Xb, \Yb) \\
= & \frac{1}{2}\left\{ \max_{\varphi}\int_\Xc\varphi(\Xb)d\mu(\Xb)+ \int_\Yc \varphi^c(\Yb) d\nu(\Yb)\right. \\
&+ \left.  \max_{\psi}\int_\Xc\psi^c(\Xb)d\mu(\Xb)+ \int_\Yc \psi(\Yb) d\nu(\Yb) \right\}
\end{align*}
where the so-called c-transforms $ \varphi^c(\Yb)$ and $\psi^c(\Xb)$ are defined by \cite{villani2008optimal}
\begin{align*}
 \varphi^c(\Yb)
&= \inf_\Xb \{ c_{XY}(\Xb, \Yb) -\varphi(\Xb) \}\\
&= \inf_\Xb \{ d_{I}\left(\Yb,\mathcal{F}^{-1} \Pc_{\Lambda} \mathcal{F} \Xb\right)+d_{I} \left(\Xb,  G_\Theta(\Yb)\right)  -\varphi(\Xb) \}\\
 \psi^c(\Xb)
&= \inf_\Yb \{ c_{XY}(\Xb, \Yb) -\psi(\Yb) \}\\
&= \inf_\Yb \{ d_{I}\left(\Yb,\mathcal{F}^{-1} \Pc_{\Lambda} \mathcal{F} \Xb\right)+d_{I} \left(\Xb,  G_\Theta(\Yb)\right)  -\psi(\Yb) \}\
\end{align*}
Now, instead of finding the $\inf_\Xb$, we choose $\Xb=G_\Theta(\Yb)$.
Similarly, instead of finding the $\inf_\Yb$, we choose $\Yb=\Fc^{-1}\Pc_\Lambda \Fc \Xb$. 
This leads to an upper bound: 
\begin{eqnarray*}
\Kd_{XY}(\Theta)&\leq & 
\frac{1}{2}\left(\ell_{cycle}(\Theta)+ \ell_{OT'}(\Theta)\right)
\end{eqnarray*}
where
\begin{align}
\ell_{cycle}(\Theta) =&  \int_\Xc d_{I} \left(\Xb,  G_\Theta(\Fc^{-1}\Pc_\Lambda \Fc \Xb)\right)d\mu(\Xb)  \notag\\
& +\int_\Yc  d_{I}\left(\Yb,\mathcal{F}^{-1} \Pc_{\Lambda} \mathcal{F} G_\Theta(\Yb)\right)d\nu(\Yb) \label{eq:cycle}\\
 \ell_{OT'}(\Theta) = & \max_{\varphi}\int_\Xc \varphi(\Xb)  d\mu(\Xb) - \int_\Yc \varphi(G_\Theta(\Yb))d\nu(\Yb)  \notag\\
 &+ \max_{\psi}\int_{\Yc} \psi(\Yb)  d\nu(\Yb) - \int_\Xc \psi(\mathcal{F}^{-1} \Pc_{\Lambda} \mathcal{F} \Xb)  d\mu(\Xb)
\end{align}
Now, using Kantorovich potentials that satisfy \eqref{eq:bound}, we have
\begin{align*}
 \varphi(\Xb)-&\varphi(G_\Theta(\Yb))\leq d_I(\Xb, G_\Theta(\Yb)) \\
&\leq  d_I(\Xb, G_\Theta(\Yb))+ d_{I}\left(\Yb,\mathcal{F}^{-1} \Pc_{\Lambda} \mathcal{F} \Xb\right) \\
 \psi(\Yb)-&\psi(\mathcal{F}^{-1} \Pc_{\Lambda} \mathcal{F} \Xb)\leq  d_{I}\left(\Yb,\mathcal{F}^{-1} \Pc_{\Lambda} \mathcal{F} \Xb\right) \\
&\leq  d_I(\Xb, G_\Theta(\Yb))+ d_{I}\left(\Yb,\mathcal{F}^{-1} \Pc_{\Lambda} \mathcal{F} \Xb\right) 
\end{align*}
This leads to the following lower-bound
\begin{eqnarray*}
\Kd_{XY}(\Theta)
&\geq& \frac{1}{2}\ell_{OT'}(\Theta)
\end{eqnarray*}
By collecting the two bounds, we have
\begin{eqnarray*}
| \Kd_{XY}(\Theta) -\Dd_{XY}(\Theta) | \leq \frac{1}{4}\ell_{cycle}(\Theta,\Hc).
\end{eqnarray*}
where $\Dd_{XY}(\Theta)$ is defined as
\begin{align*}
\Dd_{XY}(\Theta):=\frac{1}{2}\ell_{OT'}(\Theta)+\frac{1}{4}\ell_{cycle}(\Theta)
\end{align*}
Therefore, the following primal problem of the optimal transport:
\begin{align}
\min\limits_{\Theta}\Kd_{XY}(\Theta) 
\end{align}
 can be equivalently represented by a constrained optimization problem:
\begin{align} \label{eq:dualc}
\min\limits_{\Theta}~&\Dd_{XY}(\Theta) \quad \mbox{subject to}~\ell_{cycle}(\Theta)=0
\end{align}
Using Lagrangian multiplier,  the constrained optimization problem \eqref{eq:dualc} can be converted to an unconstrained
optimization problem: 
\begin{align*} 
\min\limits_\Theta\ell_{OT'}(\Theta)+\gamma\ell_{cycle}(\Theta)
\end{align*}
where $\alpha$ is a Lagrangian parameter and $\gamma=\alpha+\frac{1}{2}$.
By implementing the Kantorovich potential using CNNs with  parameters $\Upsilon$ and $\Xi$,
i.e. $\varphi:=\varphi_\Upsilon$ and $\psi:=\psi_\Xi$, 
we have the the following cycleGAN problem:
\begin{align}\label{eq:cycleGAN}
\min\limits_{\Theta}\max\limits_{\Upsilon,\Xi}\ell(\Theta;\Upsilon,\Xi)
\end{align}
where
\begin{eqnarray}
\ell(\Theta;\Upsilon,\Xi) =  \ell_{GAN}(\Theta;\Upsilon) + \ell_{dual}(\Xi)+ \gamma\ell_{cycle}(\Theta)\notag
\end{eqnarray}
where $ \ell_{cycle}(\Theta)$ denotes the cycle-consistency loss in \eqref{eq:cycle}  and
$\ell_{GAN}(\Theta;\Upsilon)$ is the GAN loss given by:
\begin{align}\label{eq:ellGAN}
\ell_{GAN}(\Theta;\Upsilon)=& \int_{\Xc} \varphi_\Upsilon(\Xb)d\mu(\Xb) - \int_\Yc \varphi_\Upsilon(G_\Theta(\Yb)) d\nu(\Yb)   \notag \\ 
\ell_{dual}(\Xi)=& \int_{\Yc} \psi_\Xi(\Yb)d\nu(\Yb) - \int_\Xc \psi_\Xi(\mathcal{F}^{-1} \Pc_{\Lambda} \mathcal{F} \Xb)d\mu(\Xb)
\end{align}
Note that the parameter $\Xi$ does not affect the generator $G_\Theta(\Yb)$, since the forward operator
$\mathcal{F}^{-1} \Pc_{\Lambda} \mathcal{F} $ is fixed.
Therefore, our simplified min-max optimization problem becomes
\begin{align}\label{eq:cycleGAN2}
\min\limits_{\Theta}\max\limits_{\Upsilon}\ell_{GAN}(\Theta;\Upsilon) + \gamma\ell_{cycle}(\Theta)
\end{align}
Since this is from the dual formulation of $\Kb_{XY}(\Theta)$, the final step of the proof includes
the remaining terms in \eqref{eq:Kdorg}.
This leads to the following min-max problem:
\begin{align}\label{eq:cycleGANfinal}
\min\limits_{\Theta}\max\limits_{\Upsilon}\ell_{GAN}(\Theta;\Upsilon) + \gamma\ell_{cycle}(\Theta) +\alpha\ell_{identity}(\Theta)+\beta\ell_{freq}(\Theta)
\end{align}
where $\ell_{identity}$ and $\ell_{freq}$ are defined by \eqref{eq:identity} and \eqref{eq:freq}, respectively.
This concludes the proof.

\bibliographystyle{IEEEtran}
\bibliography{ref,submit_bib}


\end{document}